\documentclass[letterpaper, 10 pt, conference]{}
\IEEEoverridecommandlockouts	 
				

\usepackage{cite}
\usepackage{amsmath,amssymb,amsfonts}
\usepackage{algorithmic}
\usepackage[linkcolor=blue, citecolor=blue]{hyperref}
\usepackage{textcomp}
\usepackage{graphicx,color}
\usepackage{mathrsfs}
\usepackage[vlined,ruled]{algorithm2e}
\usepackage{url}
\usepackage{color}
\usepackage{dsfont}
 \usepackage{bbm}
\usepackage{booktabs}
\usepackage{array}
\usepackage[table]{xcolor} 
\usepackage{yfonts}
\usepackage[normalem]{ulem}
\usepackage{arydshln}
\usepackage{multicol}
\usepackage{soul}

\newtheorem{theorem}{Theorem}[section]
\newtheorem{lemma}[theorem]{Lemma}

\newtheorem{remark}{Remark}
\newtheorem{assumption}[theorem]{Assumption}

\newcommand{\setdef}[2]{\{#1 \; : \; #2\}}
\newcommand{\subscr}[2]{{#1}_{\textup{#2}}}

\newcommand{\until}[1]{\{1,\dots,#1\}}

\usepackage{tabularx}
\usepackage{multirow}
\usepackage{makecell}

\newcommand\aamsout{\bgroup\markoverwith{\textcolor{violet}{\rule[0.5ex]{2pt}{1pt}}}\ULon}


\newcommand{\Ker}{\operatorname{Ker}}

\newcommand{\Image}{\operatorname{Im}}

\newcommand{\real}{\mathbb{R}}

\newcommand{\transpose}{\mathsf{T}} 
\newcommand{\T}{\mathsf{T}} 

\newcommand{\mc}{\mathcal}


\DeclareSymbolFont{bbold}{U}{bbold}{m}{n}
\DeclareSymbolFontAlphabet{\mathbbold}{bbold}

%


\newcommand\oprocendsymbol{\hbox{$\square$}}
\newcommand\oprocend{\relax\ifmmode\else\unskip\hfill\fi\oprocendsymbol}


\newcommand*{\QEDA}{\hfill\ensuremath{\blacksquare}}%
\newenvironment{pfof}[1]{\vspace{1ex}\noindent{\itshape Proof of
    #1:}\hspace{0.5em}} {\hfill\QEDA\vspace{1ex}}

\hyphenation{op-tical net-works semi-conduc-tor}

\usepackage{mathtools}



\begin{document}
\pagestyle{empty}
\graphicspath{{img/}}


\title{\bf \huge Imitation and Transfer Learning for LQG Control}

\author{Taosha Guo, Abed AlRahman Al Makdah, Vishaal Krishnan, and
  Fabio Pasqualetti \thanks{This material is based upon work supported
    in part by awards ONR-N00014-19-1-2264, AFOSR-FA9550-19-1-0235,
    and AFOSR-FA9550-20-1-0140.  T. Guo and F. Pasqualetti are with
    the Department of Mechanical Engineering, and A. A. Al Makdah is
    with the Department of Electrical and Computer Engineering at the
    University of California,
    Riverside. \href{mailto:tguo023@ucr.edu}{\{\texttt{tguo}},\href{mailto:aalmakdah@engr.ucr.edu}{\texttt{aalmakdah}},\href{mailto:fabiopas@engr.ucr.edu}{\texttt{fabiopas\}@engr.ucr.edu}}.
    V. Krishnan is with the School of Engineering and Applied
    Sciences, Harvard University,
    \href{mailto:vkrishnan@seas.harvard.edu}{\texttt{vkrishnan@seas.harvard.edu}}.}}
%
\maketitle
\thispagestyle{empty}
\begin{abstract}
  In this paper we study an imitation and transfer learning setting
  for Linear Quadratic Gaussian (LQG) control, where (i) the system
  dynamics, noise statistics and cost function are unknown and expert
  data is provided (that is, sequences of optimal inputs and outputs)
  to learn the LQG controller, and (ii) multiple control tasks are
  performed for the same system but with different LQG costs. We show
  that the LQG controller can be learned from a set of expert
  trajectories of length $n(l+2)-1$, with $n$ and $l$ the dimension of
  the system state and output, respectively. Further, the controller
  can be decomposed as the product of an estimation matrix, which
  depends only on the system dynamics, and a control matrix, which
  depends on the LQG cost. This data-based separation principle allows
  us to transfer the estimation matrix across different LQG tasks, and
  to reduce the length of the expert trajectories needed to learn the
  LQG controller to~$2n+m-1$ with $m$ the dimension of the inputs (for
  single-input systems with $l=2$, this yields approximately a $50\%$
  reduction of the required expert data).
\end{abstract}

\section{Introduction}\label{sec: introduction}
\smallskip
\noindent

Imitation and transfer learning are popular techniques to learn
optimal policies while reducing the amount of labeled data. In
imitation learning, an agent is given access to samples of expert
(optimal) behavior and seeks to learn a policy that
mimics this behavior. In transfer learning, a model trained on one
task is used as the starting point for a model on a second related
task. The key idea is that certain features learned by the model on
the first task can be used as a general-purpose set of features for
the second task, allowing the model to learn the second task
efficiently. While these techniques have proven useful in multiple
learning scenarios, including image classification and natural
language processing, their use and~utility in control settings have
mostly escaped~scrutiny.

In this paper we investigate the use of imitation and transfer
learning for Linear Quadratic Gaussian (LQG) control,
which seeks a control policy for a stochastic linear
  system that minimizes the expected value of a quadratic function of
  the state and input \cite{KZ-JCD-KG:96}. We consider multiple
control tasks, where the system dynamics are fixed but the quadratic
cost function varies.\footnote{An example of our setting
    is the control
    of autonomous vehicles with cost functions that capture different
    levels of fuel consumption and travel times.} We assume that the
system dynamics, noise statistics, and cost functions are unknown, and
that datasets are available containing optimal input and output
trajectories for the different cost functions (source tasks). The
questions that we answer include whether it is possible to learn the
LQG controllers from expert data, the required size of the dataset, 
and whether the source datasets can be leveraged to learn the
controller for a target task. We show that our data-based controller
enjoys a separation property similar to the well-known separation
principle \cite{KZ-JCD-KG:96}, and that the lower-dimensional
data-based estimation module can be transferred upon changes of the
cost function to reduce the amount of expert data required for control
design.


\noindent
\textbf{Related work.} A number of approaches to direct and indirect
data-driven control have recently been proposed. Most approaches focus
on learning optimal policies from open-loop data for a fixed task and
cost, e.g., see \cite{BR:18,KZ-BH-TB:20,FD-IM:21}.
Differently 
from these works, this paper considers an imitation and transfer
learning framework, where control policies are constructed by
imitating expert demonstrations and transferring information across
multiple, similar control tasks. Multi-task scenarios have received
less attention, with
\cite{LX-LY-GC-SS:22,YC-AMO-FP-EDA:23} and 
\cite{TTZ-KK-BDL:22} being recent exceptions for system identification
and control design, respectively. In \cite{TTZ-KK-BDL:22}, in
particular, the notion of a common lower-dimensional representation
among the tasks is used to reduce the amount of data required for
control design across tasks. Similarly to \cite{TTZ-KK-BDL:22}, this
paper also exploits a lower-dimensional representation for efficient
transfer learning. However, differently from \cite{TTZ-KK-BDL:22} and
leveraging \cite{AAAM-VK-VK-FP:22}, this paper focuses on the LQG
control problem and provides a precise, quantitative characterization
of the lower-dimensional representation for multi-task LQG design from
expert demonstrations, as well as tight bounds on the required
data. This paper also differs from
\cite{SL-KA-BH-AA:20,YZ-LF-MK-NL:21}, which study the sample
complexity of learning LQG controllers in state-space form from
open-loop data.

\noindent
\textbf{Contribution of the paper.} The main contributions of this
paper are as follows. First, we formalize an imitation and transfer
learning setting for LQG control. We show that the LQG controller can
be learned using an optimal input-output trajectory of length
$n(l+2)-1$, where $n$ denotes the dimension of the system and $l$ the
number of outputs. Further, we show that the proposed LQG controller
is unique for the case of single-input systems, while it admits
multiple representations for multi-input systems. Second, we prove the
existence of a data-based separation principle since the data-based
LQG controller can be written as the product of two matrices: the
estimation matrix, which depends only on the system dynamics, and the
controller matrix, which depends on the system dynamics and the
quadratic cost function. Further, for the case of single-input
systems, we show how the estimation matrix can be learned uniquely
using the expert datasets (we also discuss and validate a procedure
for the multi-input case). Third, we show how the data-based
separation principle can be used for transfer learning because the
estimation matrix remains invariant upon changes of the LQG cost
function. By doing so, we show that an expert dataset of length
$2n+m-1$ is sufficient to learn the LQG controller, thus confirming
the benefits of transfer learning also for control design (for
instance, for single input systems with $l=2$, our transfer learning
technique reduces the amount of expert data by about $50\%$). As minor
results, we show that the estimation matrix is of full row rank, thus
suggesting the minimality of the internal representation, and that the
system dimension can be learned using a single expert input-output
trajectory of finite length.



\section{Problem formulation and preliminary results}\label{sec:
  setup}
Consider the discrete-time, linear, time-invariant system
\begin{equation}\label{eq: system}
\begin{aligned}
    x(t+1) &= A x(t) +B u(t)+ w(t),\\
        y(t)  & = Cx(t)+v(t), \qquad t\geq 0,
\end{aligned}
\end{equation}
where $x(t)\in\real^{n}$ denotes the state, $u(t)\in \real^{m}$ the
control input, $y(t)\in\real^{l}$ the measured output, $w(t)$ the
process noise, and $v(t)$ the measurement noise. We assume that the
process and measurement noise sequences are independent at all times
and satisfy $w(t)\sim\mc{N}(0,W)$ and $v(t)\sim\mc{N}(0,V)$,
with $W\succeq 0$ and $V\succ 0$.
Further, we assume that $(A, B)$ and
$(A, W^{\frac{1}{2}})$ are controllable, and that $(A,C)$ is
observable.

For the system \eqref{eq: system}, the Linear Quadratic Gaussian (LQG)
control problem asks for an input that minimizes~the~cost
\begin{align}\label{eq: LQG cost}
  \lim_{T\rightarrow \infty}\mathbb{E} \left[
  \frac{1}{T}\Big(\sum_{t=0}^{T-1}x(t)^{\T}Qx(t) +
  u(t)^{\T} R u(t)\Big) \right] , 
\end{align}
where $Q\succeq 0$, $R\succ 0$ are weight matrices and $T$ is the
control horizon. We assume that $(A,Q^{\frac{1}{2}})$ is observable.

As a classic result \cite{KZ-JCD-KG:96}, the optimal input that solves
the LQG problem can be generated by a dynamic controller:
\begin{equation}\label{eq: compensator}
  \begin{aligned}
    \hat{x}(t+1) &= E \hat{x}(t) + F u(t) +G y(t+1),\\
    u(t) & = H \hat{x}(t),
  \end{aligned}
\end{equation}
where the controller matrices $E\in \mathbb{R}^{n\times n}$,
$F \in \mathbb{R}^{n \times m}$, $G \in \mathbb{R}^{n \times l}$ and
$H \in \mathbb{R}^{m \times n}$ can be obtained by combining the
Kalman filter for \eqref{eq: system} with the static controller that
solves the Linear Quadratic Regulator (LQR) problem for \eqref{eq:
  system} with weight matrices $Q$ and $R$ (separation principle). In
this case, $\hat{x}(t) \in \mathbb{R}^n $ denotes the estimate of
$x(t)$ generated by the Kalman filter and the controller matrices
that satisfy
\begin{equation}\label{eq: compensator matrices}
  \begin{alignedat}{3}
    E &= (I-L_{\text{f}} C )A, && \;\;\;\;\;\; &&F = (I- L_{\text{f}} C)B, \\
    G &= L_{\text{f}}, &&  &&H= K_{\text{LQR}} ,
  \end{alignedat}
\end{equation}
where $K_{\text{LQR}}$ and $L_{\text{f}}$ are the LQR and Kalman
gains, respectively. Although different choices are possible, we
assume that the controller \eqref{eq: compensator} uses the matrices
\eqref{eq: compensator matrices} for simplicity and to further
highlight the connections between our results and the classic
separation-based solution to the LQG control problem. Additionally, we
make the following technical assumption.\footnote{This
    assumption is satisfied for generic choices of system
    parameters.}

\begin{assumption}{\bf \emph{(Observability and controllability of the
      controller)}}\label{asmp: compensator observability}
  Let $K_{\text{LQR},i}$ be the $i$-th row of the LQR gain
  $\subscr{K}{LQR}$. Then, the pair $(E, K_{\text{LQR},i})$ is
  observable for every $i \in \until{m}$, and the pair
  $(E,\subscr{L}{f})$ is controllable. \oprocend
\end{assumption}

\medskip The optimal inputs generated by the dynamic controller
\eqref{eq: compensator} can also be obtained using a static gain and a
finite window of past inputs and outputs \cite{AAAM-VK-VK-FP:22}. In
particular, the optimal LQG inputs $u^*$ satisfy the following
relation:
\begin{align}\label{eq: static LQG}
  u^*(t+n) = \subscr{K}{LQG}
  \begin{bmatrix}
    U_n(t) \\ Y_{n}(t+1)
  \end{bmatrix},
\end{align}
where
\begin{align}\label{eq: Klqg model}
 \subscr{K}{LQG} & = H  \begin{matrix}
   \begin{bmatrix}
     F_u+E^n F_x^{\dagger}(I-M_u)& F_y-E^n F_x^{\dagger} M_y 
   \end{bmatrix}
   ,
 \end{matrix}
\end{align}
and $U_n(t)$, $Y_n(t+1)$ are constructed as follows from $u^*$ and its
corresponding output $y^*$ from the system \eqref{eq: system},
\begin{align}\label{eq U and Y}
  U_n(t) = 
  \begin{bmatrix}
    u^*(t) \\ \vdots \\ u^*(t + n - 1)
  \end{bmatrix}
  ,
    Y_n(t+1) = 
  \begin{bmatrix}
    y^*(t + 1) \\ \vdots \\ y^*(t+n)
  \end{bmatrix}
  ,
\end{align}
and
\begin{align*}
  M_u & =
        \begin{bmatrix}
          0 & \\
          H F & & \\
          \vdots & \ddots &  & \\
          H E^{n-2} F & \cdots & H F & 0
        \end{bmatrix}
                               ,
                               F_x
                               =
                               \begin{bmatrix}
                                 H\\
                                 H E\\
                                 \vdots\\
                                 H E^{n-1}
                               \end{bmatrix}, 
  \\
  F_u & = 
        \begin{bmatrix}
          E^{n-1}F & \cdots  & F
          \end{bmatrix}, 
\end{align*}
with $M_y$ and $F_y$ constructed in the same way as $M_u$ and
$F_u$ by replacing $F$ by $G$. The expression \eqref{eq: static LQG}
is convenient for learning purposes and it will be at the basis of our
approach. The next technical result will be useful for our derivations
(a proof can be found in the Appendix).
\begin{lemma}{\bf \emph{(Properties of input-output
      matrices)}}\label{lemma: data matrix}
  Let\footnote{This result holds also when the input and output
    sequences are taken from \eqref{eq: compensator} but are not
    generated by the optimal LQG compensator, that is, when the
    compensator is defined with arbitrary matrices $E$, $F$, $G$, and
    $H$.}
  \begin{align}\label{eq: H}
    H_{r,c} = 
    \begin{bmatrix}
      U_r(t) & \cdots & U_r(t+c-1) \\
      Y_r(t+1) & \cdots & Y_r(t+c) 
    \end{bmatrix},
  \end{align}
  with $r \in \mathbb{N}_{\geq 0}$, $t \in \mathbb{N}_{\geq 0}$, and
  $U_r(t)$, $Y_r(t+1)$ as in \eqref{eq U and Y}. Then
  \begin{align*}
    \text{Rank} (H_{r,c}) = \min \{mr+lr,c, n+lr\} .
  \end{align*}
\end{lemma}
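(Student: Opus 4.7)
I would factor $H_{r,c}$ through the natural ``filter state plus future outputs'' coordinates, which reduces the rank computation to two independent pieces: a deterministic matrix built from the compensator and a data matrix whose columns are generated stochastically.

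The first step is a deterministic identity. Substituting $u(t)=H\hat{x}(t)$ into the filter equation of \eqref{eq: compensator} gives $\hat{x}(t+1)=(E+FH)\hat{x}(t)+Gy(t+1)$, which upon iteration produces, for $k=0,\dots,r-1$,
\begin{align*}
  u(t+k) = H(E+FH)^k \hat{x}(t) + \sum_{j=0}^{k-1} H(E+FH)^{k-1-j} G\, y(t+j+1).
\end{align*}
Stacking yields $U_r(t)=P\hat{x}(t)+QY_r(t+1)$ for explicit blocks $P\in\mathbb{R}^{mr\times n}$ and $Q\in\mathbb{R}^{mr\times lr}$. Consequently $H_{r,c}=MD$, where
\begin{align*}
M=\begin{bmatrix}P & Q \\ 0 & I_{lr}\end{bmatrix}\in\mathbb{R}^{(m+l)r\times(n+lr)},\quad D=\begin{bmatrix}\hat{x}(t) & \cdots & \hat{x}(t+c-1)\\ Y_r(t+1) & \cdots & Y_r(t+c)\end{bmatrix}.
\end{align*}

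From this factorization the upper bound $\text{Rank}(H_{r,c})\le\min\{(m+l)r,c,n+lr\}$ is immediate, since $M$ has $(m+l)r$ rows and $n+lr$ columns while $D$ has $c$ columns. For the matching lower bound I would argue separately that (a) $\text{Rank}(M)=\min\{(m+l)r,n+lr\}$ and (b) $\text{Rank}(D)=\min\{n+lr,c\}$ almost surely. For (a), column operations reduce $M$ to $\blkdiag(P,I_{lr})$, so $\text{Rank}(M)=lr+\text{Rank}(P)$; since the row operation $E+FH\mapsto E$ on the PBH matrix $[\lambda I-E;\,H]$ preserves observability, Assumption~\ref{asmp: compensator observability} (with $H=K_{\text{LQR}}$) implies that $(E+FH,H)$ is observable, whence $P$ attains rank $\min(n,mr)$. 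For (b), I would use the controllability of $(E,L_{\text{f}})=(E,G)$ from Assumption~\ref{asmp: compensator observability}, together with the non-degenerate Gaussian noise driving $\hat{x}$ and $y$, to show that the columns of $D$ generically span $\mathbb{R}^{n+lr}$ as soon as $c\ge n+lr$.

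\textbf{Main obstacle.} The subtlest point is verifying $\text{Rank}(P)=\min(n,mr)$ in the multi-input regime: row-by-row observability of $(E,K_{\text{LQR},i})$ only delivers $\min(r,n)$ independent rows of $P$ per row of $H$, so reaching $mr$ independent rows when $mr<n$ requires a genericity argument that leverages the structure of $H$ and $E+FH$. The second nontrivial point is making the almost-sure claim in (b) rigorous; here I would exploit that $v\sim\mathcal{N}(0,V)$ with $V\succ 0$ injects full-rank noise into $y$, which, combined with controllability of $(E,G)$, prevents the columns of $D$ from being confined to any proper subspace of $\mathbb{R}^{n+lr}$.
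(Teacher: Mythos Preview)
Your approach is essentially identical to the paper's: it factors $H_{r,c}=MN$ through the coordinates $[\hat{x};Y_r]$ via the closed-loop recursion $\hat{x}(t+1)=(E+FH)\hat{x}(t)+Gy(t+1)$, obtains exactly your block-triangular $M$, and invokes observability of $(\bar E,H)=(E+FH,H)$ to get $\text{Rank}(M)=n+lr$ in the regime $mr\ge n$ (the paper, like you, does not resolve the case $mr<n$ that you correctly flag as delicate; this is harmless since the lemma is applied only with $r=n$). For $\text{Rank}(N)=\min\{n+lr,c\}$ the paper simply cites a persistency-of-excitation result of Willems et al.\ in place of the direct noise-based argument you sketch, but the underlying mechanism is the same.
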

\medskip

The static LQG controller $\subscr{K}{LQG}$ can be computed using the
static relation \eqref{eq: static LQG} and a sufficiently long, yet
finite, optimal input-output trajectory. In fact, using optimal input
and output sequences, the LQG gain \eqref{eq: static LQG} can be
written as
\begin{align}\label{eq: reconstruct KLQG}
  \subscr{K}{LQG}  = 
  \underbrace{
  \begin{bmatrix}
    u^*(t+n) & \cdots & u^*(t + n + c-1)
  \end{bmatrix}}_{\bar{U}_{c}}
                        H_{n,c}^\dagger,
\end{align}
with $c \ge n (l+1)$. 
Lemma \ref{lemma: data matrix} implies that the data matrix $H_{n,c}$
in \eqref{eq: reconstruct KLQG} is of full row rank for single-input
systems. In this case, the gain $\subscr{K}{LQG}$ is unique and can be
reconstructed exactly from data. On the other hand, $H_{n,c}$ in
\eqref{eq: reconstruct KLQG} loses rank for multi-input systems, in
which case there exists multiple static LQG gains that satisfy the
relation \eqref{eq: static LQG}.

The design of the LQG compensator \eqref{eq: compensator} or
\eqref{eq: static LQG} is typically done using the system model
\eqref{eq: system}, the noise statistics $W$ and $V$, and the cost
matrices $Q$ and $R$. Instead, in this paper we are interested in
solving the LQG problem in a data-driven setting and without
identifying the system matrices. In particular, we consider an
imitation and transfer learning scenario, where an expert provides a
sequence of inputs $u^*$ minimizing the LQG cost \eqref{eq: LQG cost}
and the corresponding outputs $y^*$ of the system \eqref{eq: system},
for different choices of the weight matrices $Q$ and $R$. The question
that we address is to quantify the sample complexity of learning the
LQG controller. We show that, after sufficient training with different
weight matrices, as little as ${2n+m-1}$ expert samples are sufficient
to learn new LQG~controllers.

To formalize the considered problem, let
$D_i = \{u_i^*, y_i^*\}_{t}^{t+T}$ be the sequence of expert (optimal) inputs solving the LQG problem from time $t $
up to time $t+T$ for the matrices $Q_i$ and $R_i$, and the
corresponding outputs of the system \eqref{eq: system}.
We assume that an expert provides optimal sequences for $N$ source LQG
tasks, that is, $D_1, \dots, D_N$, and one optimal sequence
$D_{\text{target}} = \{u^*_{\text{target}}, y^*_{\text{target}}
\}_t^{t+\bar{T}}$ of length $\bar T$, with $(\bar T <T)$, for the
target LQG task with matrices $Q_\text{target}$ and
$R_\text{target}$. The objective is to learn the gain
$\subscr{K}{LQG}$ for the target task using the data $D_1, \dots, D_N$
and $D_{\text{target}}$. In particular, how large should $T$, $\bar T$
and $N$ be for this design to be feasible?  We remark that the system
dynamics and noise statistics are unknown and remain unchanged across
all source and target tasks, and that the weight matrices associated
with the source and target tasks are not known nor provided by the
expert. 

\section{Separation principle for data-driven LQG control and optimal
  internal representations}\label{sec: main results}
The separation principle states that the LQG controller can be
designed by combining the solution to two separate simpler problems,
namely, an optimal estimation problem and a deterministic optimal
control problem with quadratic cost \cite{KZ-JCD-KG:96}. This insight
is at the basis of the model-based design of the LQG controller. On
the other hand, equation \eqref{eq: static LQG} shows that the LQG
controller can assume a much simpler, static, form, but the expression
lacks interpretability and does not hint to any separation of
estimation and control. The next result bridges this gap by revealing
a decomposition of $\subscr{K}{LQG}$ into independent controller and
estimation~matrices.

\begin{theorem}{\bf \emph{(Data-driven separation
      principle)}}\label{thm: data-driven separation}
  Let $\subscr{K}{LQG}$ be as in \eqref{eq: static LQG}. Then, using
  the notation in Equation~\eqref{eq: compensator},
  
 {\footnotesize
  \begin{align*}
    \begin{aligned}
      \subscr{K}{LQG} = \underbrace{
        \begin{bmatrix}
          \subscr{K}{LQR} & I_m  
        \end{bmatrix}}_{K}
      \underbrace{
        \begin{bmatrix}
          F_u - ( a \otimes I_n) \tilde{M}_u 	\!\!&\!\! F_y - ( a \otimes I_n) \tilde{M}_y\\
           a \otimes I_m \!\!&\!\! 0
        \end{bmatrix} }_{L_{\text{est}}},
    \end{aligned}
  \end{align*}
} where 

{\footnotesize
    \begin{align}\label{eq: Mu My tilde}
      \tilde{M}_u  = 
      \begin{bmatrix}
        0 \\
        F &  0 \\
        \vdots &  \ddots & \ddots  \\
        E^{n-2}F & \cdots &  F &  0
      \end{bmatrix},
                                   \tilde{M}_y  = 
                                   \begin{bmatrix}
                                     0 \\
                                     G &  0 \\
                                     \vdots &  \ddots & \ddots  \\
                                     E^{n-2}G & \cdots &  G &  0
                                   \end{bmatrix},
    \end{align}}
  and $ a =
  \begin{bmatrix}
    a_0 & \dots & a_{n-1}
  \end{bmatrix}$
  such that 
  \begin{align*}
    E^{n} = a_{0} I + a_{1}E + \cdots + a_{n-1}E^{n-1} .
  \end{align*}
\end{theorem}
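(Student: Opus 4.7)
The plan is to verify the identity by expressing $u^*(t+n)$ as a linear function of $[U_n(t)^\T,\, Y_n(t+1)^\T]^\T$ in two different ways---via the pseudoinverse-based form \eqref{eq: Klqg model} and via a Cayley--Hamilton-based form---and matching the two factored expressions.

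I would start from the Kalman-filter identities $\hat x(t+n) = E^n \hat x(t) + F_u U_n(t) + F_y Y_n(t+1)$ and $u^*(t+n) = H\hat x(t+n)$, and apply the Cayley--Hamilton theorem $E^n = \sum_{i=0}^{n-1} a_i E^i$, so that $HE^n \hat x(t) = \sum_i a_i HE^i \hat x(t)$. Unrolling the Kalman recursion yields
\begin{align*}
E^i \hat x(t) = \hat x(t+i) - \sum_{j=0}^{i-1} E^{i-j-1}\bigl(F u^*(t+j) + G y^*(t+j+1)\bigr).
\end{align*}
Substituting this in, swapping the order of summation, and using $H\hat x(t+i) = u^*(t+i)$, the state contribution collapses to $\sum_{i=0}^{n-1} a_i u^*(t+i) = (a\otimes I_m) U_n(t)$, while the remaining nested sums match exactly the block-Toeplitz matrices $\tilde M_u$ and $\tilde M_y$ of \eqref{eq: Mu My tilde}, contributing $-(a\otimes H)\tilde M_u U_n(t) - (a\otimes H)\tilde M_y Y_n(t+1)$.

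Combining all terms, I would arrive at
\begin{align*}
u^*(t+n) &= \bigl[H F_u - (a\otimes H)\tilde M_u + (a\otimes I_m)\bigr] U_n(t) \\
&\quad + \bigl[H F_y - (a\otimes H)\tilde M_y\bigr] Y_n(t+1).
\end{align*}
Using the Kronecker identity $H(a\otimes I_n) = a\otimes H$ in reverse, the right-hand side factors as $\begin{bmatrix}K_{\text{LQR}} & I_m\end{bmatrix} L_{\text{est}} [U_n(t)^\T\; Y_n(t+1)^\T]^\T$, which together with $u^*(t+n) = K_{LQG}\,[U_n(t)^\T\; Y_n(t+1)^\T]^\T$ and $H=K_{\text{LQR}}$ yields the claimed decomposition $K_{LQG} = K L_{\text{est}}$.

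The main obstacle is the Cayley--Hamilton bookkeeping: one must carefully reindex the nested sums $\sum_i a_i \sum_{j<i} E^{i-j-1}(\cdot)$ so they coincide with the staircase structure of $\tilde M_u$ and $\tilde M_y$, and manage the Kronecker identifications among $(a\otimes I_n)$, $(a\otimes H)$, and $(a\otimes I_m)$ so that the final factorization matches the stated $L_{\text{est}}$. A subsidiary point is that for multi-input systems $K_{LQG}$ in \eqref{eq: reconstruct KLQG} is non-unique, so the argument is cleanest when read as establishing the equality of the two linear maps acting on the data window, which is the sense consistent with the specific representation \eqref{eq: Klqg model} obtained from the Kalman filter / LQR decomposition; in the single-input case, Lemma \ref{lemma: data matrix} then promotes this to a matrix equality.
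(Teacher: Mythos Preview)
Your argument is correct and in fact cleaner than the paper's. The paper proceeds row by row: for each input channel $i$ it forms the single-output observability matrix $F_x^i$ (using Assumption~\ref{asmp: compensator observability} to guarantee invertibility), solves for $\hat x(t)$ via ${F_x^i}^{-1}$, and only then applies Cayley--Hamilton in the form $K_{\text{LQR},i}E^n = a\,F_x^i$ to cancel the inverse. The rows are reassembled at the end via the identity $[(aP_1)^\transpose,\dots,(aP_m)^\transpose]^\transpose = a\otimes I_m$. Your route bypasses all of this: by unrolling the filter recursion forward to write $E^i\hat x(t) = \hat x(t+i) - (\cdots)$ and then collapsing $H\hat x(t+i)$ to $u^*(t+i)$, you never need to invert any observability matrix and hence never invoke Assumption~\ref{asmp: compensator observability}. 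The Kronecker bookkeeping $H(a\otimes I_n)=a\otimes H$ then gives the factorization in one shot for all rows simultaneously.

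What the paper's detour buys is an explicit intermediate formula \eqref{eq: klqgi} tying each row of $\subscr{K}{LQG}$ to the pseudoinverse-based expression \eqref{eq: Klqg model}, though (as you correctly flag) in the multi-input case neither argument literally proves matrix equality with \eqref{eq: Klqg model}; both establish that $K\subscr{L}{est}$ satisfies the functional relation \eqref{eq: static LQG}, which is the operative content of the theorem. Your closing remark about Lemma~\ref{lemma: data matrix} handling the single-input case is exactly right.
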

\smallskip
Theorem \ref{thm: data-driven separation} shows how the LQG
gain $\subscr{K}{LQG}$ can be decomposed as the product of
$\subscr{L}{est}$, which depends only on the system matrices, and $K$,
which depends on the LQR gain. Thus, Theorem \ref{thm:
  data-driven separation} shows that the separation principle also
holds when representing the LQG controller as a function of the input
and output sequences. While also interesting as a standalone result,
Theorem \ref{thm: data-driven separation} becomes particularly useful
in our imitation and transfer learning scenario. In fact, since
$\subscr{L}{est}$ is independent of the LQR gain, it is also
independent of the weight matrices $Q$ and $R$. Thus,
$\subscr{L}{est}$ acts as an invariant component of the LQG
controllers across different LQG tasks, which can reduce the amount of
expert data required to construct LQG controllers (since
$\subscr{L}{est}$ is common to all LQG tasks, only the matrix $K$
needs to be reconstructed for the LQG controller of the target
task). 
We postpone the proof of Theorem
\ref{thm: data-driven separation} to the~Appendix.

The estimation matrix $\subscr{L}{est}$ is connected to the Kalman
filter. To see this, notice that the optimal LQG inputs satisfy
\begin{align*}
  u^*(t) = \subscr{K}{LQR} \hat x(t),
\end{align*}
where $\hat x(t)$ denotes the Kalman estimate of the state
$x(t)$. Similarly, using Theorem \ref{thm: data-driven separation} and
\eqref{eq: static LQG} we obtain
\begin{align*}
  \begin{bmatrix}
    \subscr{K}{LQR} & I_m
  \end{bmatrix}
                      \underbrace{\subscr{L}{est}
                      \begin{bmatrix}
                        U_{n}(t-n)\\ Y_{n}(t-n+1)
                      \end{bmatrix}}_{z(t)} = \subscr{K}{LQR} \hat x(t) .
\end{align*}
Thus, as the Kalman estimate provides an $n$-dimensional
\emph{internal representation} $\hat x(t)$ that can be used to
generate optimal LQG inputs via the LQR gain in a model-based setting,
the estimation matrix $\subscr{L}{est}$ provides an
$(n+m)$-dimensional internal representation, $z(t)$, that can be used
to generate optimal LQG inputs in a data-driven setting
(given the matrix $K$). Yet, while the Kalman filter
uses the whole history of inputs and outputs to generate an optimal
internal representation (encoded in the state of the
  dynamic Kalman filter), the estimator $\subscr{L}{est}$ uses only a
finite window of past inputs and outputs. Finally, as we show in the
proof of Theorem \ref{thm: sample complexity}, the matrix
$\subscr{L}{est}$ is of full row rank, thus suggesting that the
internal representation $z(t)$ is of minimal dimension to compute LQG
inputs from~data.

Theorem \ref{thm: data-driven separation} provides a model-based
expression of the control and estimation components that form the LQG
controller. Next,
we provide an expression of the estimation matrix $\subscr{L}{est}$
given data from a set of source LQG tasks. We start with the case of
single-input systems and discuss the general case afterwards. We make
the following technical assumption. 

\begin{assumption}{\bf \emph{(Number and diversity of source
      tasks)}}\label{asmp: task diversity}
  Let $D_1, \dots, D_N$ be the expert trajectories from the source LQG
  tasks with weight matrices
 $\{Q_i, R_i\}_{i=1}^N$.
 Let $\subscr{K}{LQG}^i= K^i \subscr{L}{est}$ be the LQG gain
  of the $i$-th source task. Then,
    \begin{align*}
  \bigcap_{i=1}^N \Ker(K^i) = \{0\}.
  \end{align*}
  \oprocend
\end{assumption}
We observe that the above condition is typically satisfied in practice
when $N \ge n+m$ and for generic choices of the weight matrices, as it
simply requires the LQR gains to be independent for different choices
of the weight matrices.

\begin{theorem}{\bf \emph{(Learning $ \subscr{L}{est}$ when
      $m=1$)}}\label{thm: Lest m1}
  Let $D_1, \dots, D_N$ be the expert trajectories of length
  $T \ge n(l+2)-1$.  Then,
  \begin{align}\label{eq: KL}
    \Ker(\subscr{L}{est}) = \; \bigcap_{i=1}^N \; \Ker(\bar{U}^{i}_{c_s} { H^{i^\dag}_{n,c_s}}),
  \end{align}
  where {$\bar{U}^{i}_{c_s}$} and $H^{i}_{n,c_s}$ are constructed as in
  \eqref{eq: H} using the expert dataset $D_i$ with $c_s =
  T-n+1$.  
\end{theorem}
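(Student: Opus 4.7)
The plan is to show that, for each source task, $\bar U^i_{c_s} H^{i^\dag}_{n,c_s}$ coincides exactly with $\subscr{K}{LQG}^i = K^i \subscr{L}{est}$, and then to reduce the claimed identity~\eqref{eq: KL} to the algebraic fact $\bigcap_{i=1}^N \Ker(K^i \subscr{L}{est}) = \Ker(\subscr{L}{est})$, which will follow from Assumption~\ref{asmp: task diversity}.

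First, I will invoke Lemma~\ref{lemma: data matrix} to show that $H^i_{n,c_s}$ has full row rank in the single-input setting. With $m=1$, $r=n$ and $c_s = T-n+1 \ge n(l+1)$, all three arguments of the $\min$ in Lemma~\ref{lemma: data matrix} equal $n(l+1)$, which is also the number of rows of the data matrix. Consequently, the Moore--Penrose pseudoinverse $H^{i^\dag}_{n,c_s}$ acts as a genuine right inverse, i.e., $H^i_{n,c_s} H^{i^\dag}_{n,c_s} = I$.

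Next, stacking $c_s$ consecutive instances of the static relation~\eqref{eq: static LQG} along the $i$-th expert trajectory yields the matrix identity $\bar{U}^i_{c_s} = \subscr{K}{LQG}^i H^i_{n,c_s}$; post-multiplying by the right inverse then gives $\bar{U}^i_{c_s} H^{i^\dag}_{n,c_s} = \subscr{K}{LQG}^i$, and Theorem~\ref{thm: data-driven separation} further factors this as $K^i \subscr{L}{est}$. Thus $\Ker(\bar U^i_{c_s} H^{i^\dag}_{n,c_s}) = \Ker(K^i \subscr{L}{est})$ for every $i$. The inclusion $\Ker(\subscr{L}{est}) \subseteq \bigcap_i \Ker(K^i \subscr{L}{est})$ is then immediate, while for the reverse inclusion any $v$ in the intersection satisfies $\subscr{L}{est} v \in \bigcap_i \Ker(K^i) = \{0\}$ by Assumption~\ref{asmp: task diversity}, so $v \in \Ker(\subscr{L}{est})$, completing the proof.

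The most delicate ingredient is the full-row-rank step, since it is precisely what breaks down for $m>1$: in the multi-input case the row rank of $H^i_{n,c_s}$ drops below its row dimension, the identity $\bar U^i_{c_s} H^{i^\dag}_{n,c_s} = \subscr{K}{LQG}^i$ need not hold, and a different construction is required---consistent with the paper's earlier remark on non-uniqueness of $\subscr{K}{LQG}$ for multi-input systems. Everything else is essentially bookkeeping, so the entire argument hinges on combining the rank count from Lemma~\ref{lemma: data matrix} with the factorization provided by Theorem~\ref{thm: data-driven separation}.
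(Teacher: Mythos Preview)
Your proposal is correct and follows essentially the same route as the paper: invoke Lemma~\ref{lemma: data matrix} to get full row rank of $H^i_{n,c_s}$ when $m=1$, deduce $\bar U^i_{c_s} H^{i^\dag}_{n,c_s}=\subscr{K}{LQG}^i=K^i\subscr{L}{est}$ via Theorem~\ref{thm: data-driven separation}, and then use Assumption~\ref{asmp: task diversity} to collapse the intersection of kernels. The only cosmetic difference is that the paper phrases the kernel of $K^i\subscr{L}{est}$ via the identity $\Ker(K^i\subscr{L}{est})=\Ker(\subscr{L}{est})+\subscr{L}{est}^\dag(\Image(\subscr{L}{est})\cap\Ker(K^i))$, whereas you argue the two inclusions directly; the content is the same.
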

\medskip
\begin{proof}
  Let $\subscr{K}{LQG}^i$ be the LQG controller of the $i$-th
  task. From Theorem \ref{thm: data-driven separation} we have that
  $\subscr{K}{LQG}^i = K^i \subscr{L}{est}$. Then,
  \begin{align*}
    \Ker (\subscr{K}{LQG}^i) &= \Ker(\subscr{L}{est}) +
                               \subscr{L}{est}^\dag \left( \Image(\subscr{L}{est}) \cap \Ker(K^i) \right)\\
                             &= \Ker(\bar{U}^{i}_{c_s} H^{i^\dag}_{n,c_s}),
  \end{align*}
  where the last equality is due to Lemma \ref{lemma: data matrix}
  (since $m = 1$, $ H^{i}_{n,c_s}$ is of full row rank and
  $\subscr{K}{LQG}^i = \bar{U}^i_{c_s} H^{i^\dag}_{n,c_s}$. The
  claimed statement now follows from Assumption \ref{asmp: task
    diversity}.
\end{proof}

\medskip From Theorem \ref{thm: Lest m1}, the kernel of the estimation
matrix $\subscr{L}{est}$ can be learned from a finite number of LQG
datasets, with each dataset comprising optimal input and output
trajectories of finite length. Hence, the estimation matrix
$\subscr{L}{est}$ can also be learned up to multiplication by an
invertible matrix using a basis of the orthogonal complement to
$\Ker(\subscr{L}{est})$. That is,
\begin{align*}
  \subscr{L}{est} = P \cdot \underbrace{\text{Basis} \left( \left(
  \bigcap_{i=1}^N \; \Ker(\bar{U}^i_{c_s} H^{i^\dag}_{n,c_s})\right)^\perp \right)^\transpose
  }_{\subscr{\hat L}{est}} ,
\end{align*}
for some invertible matrix $P$. Then, using Theorem \ref{thm:
  data-driven separation} and for any choice of the weight matrices
$Q$ and $R$, the LQG controller for \eqref{eq: system} can always be
written as the product $KP^{-1}\subscr{\hat{L}}{est}$, where only the
matrix $\hat{K}=KP^{-1}$ depends on the weight matrices $Q$ and $R$
and, from Theorem \ref{thm: Lest m1}, the estimation matrix
$\subscr{\hat{L}}{est}$ can be learned given a sufficiently large and
diverse dataset of expert trajectories. These observations imply that
the controller for the target LQG task can be computed by simply
learning the control matrix $\subscr{\hat{K}}{target}$ as a solution
to the linear system 
\begin{align}\label{eq: K L}
  {\bar{U}}^{\text{target}}_{c_t} =
  \subscr{\hat{K}}{target} \subscr{\hat{L}}{est}
  H^{\text{target}}_{n,{c_t}} ,
\end{align}
where $\bar{U}_{c_t}^{\text{target}}$ and {$H^{\text{target}}_{n,c_t}$
are constructed as in \eqref{eq: H} from the target dataset
$\subscr{D}{target}$, with $c_t = \bar{T}-n+1$.  The next result
quantifies the length of the expert {trajectory} $D_\text{target}$. We
make the following assumption on the target dataset
\begin{assumption}{\bf \emph{(Persistency of excitation)}}\label{asmp:
    pers excitation} For every value of $c_t$, the target dataset
  satisfies
  \begin{align*}
    \Ker(\subscr{L}{est}) \cap \Image(H^{\text{target}}_{n,c_t}) =
    \{0\} .
  \end{align*} \oprocend
\end{assumption}
We remark that Assumption \ref{asmp: pers excitation} is generically
satisfied since the entries of $H^{\text{target}}_{n,c_t}$ are driven
by the system noise.


\begin{theorem}{\bf \emph{(Length of expert trajectory to learn
      the target LQG controller)}}\label{thm: sample complexity}
  Let $\bar{T}$ be the length of the expert trajectory in
  $D_\text{target}$. The LQG controller for the target task can be
  learned whenever $\bar{T} \ge 2n + m - 1$.
\end{theorem}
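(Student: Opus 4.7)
\smallskip
\noindent\emph{Proof plan.} The plan is to reduce the claim to a rank statement about the matrix
$$M := \subscr{\hat L}{est}\, H^{\text{target}}_{n,c_t}, \qquad c_t := \bar T - n + 1 \ge n+m,$$
appearing on the right of equation~\eqref{eq: K L}. Once $\subscr{L}{est}$ (and hence $\subscr{\hat L}{est}$) is shown to have full row rank $n+m$, the linear system $\bar U^{\text{target}}_{c_t} = \subscr{\hat K}{target}\, M$ with unknown $\subscr{\hat K}{target}\in\real^{m\times(n+m)}$ has a unique solution precisely when $M\in\real^{(n+m)\times c_t}$ has full row rank $n+m$, from which the target LQG controller is recovered as $\subscr{K}{LQG}^{\text{target}} = \subscr{\hat K}{target}\,\subscr{\hat L}{est}$.

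First I would show that $\subscr{L}{est}$ has full row rank $n+m$. Assume a left-null vector $[\alpha^\transpose,\beta^\transpose]\subscr{L}{est} = 0$; the columns of $\subscr{L}{est}$ corresponding to $Y_n$ yield $\alpha^\transpose(F_y - (a\otimes I_n)\tilde M_y)=0$, while those corresponding to $U_n$ yield $\alpha^\transpose(F_u - (a\otimes I_n)\tilde M_u)+\beta^\transpose(a\otimes I_m)=0$. Expanding the Toeplitz-like block structure of $\tilde M_y$ column by column and using the Cayley--Hamilton identity $E^n = \sum_i a_i E^i$ encoded by $a$, the first equation peels off, from the last block column backwards, the relations $\alpha^\transpose G=0,\ \alpha^\transpose EG=0,\ \ldots,\ \alpha^\transpose E^{n-1}G=0$. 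Since $G=\subscr{L}{f}$, controllability of $(E,\subscr{L}{f})$ from Assumption~\ref{asmp: compensator observability} then yields $\alpha=0$, and the second equation collapses to $\beta^\transpose(a\otimes I_m)=0$, which, together with $a\ne 0$, forces $\beta=0$. Since $\subscr{\hat L}{est}$ is a basis for the row space of $\subscr{L}{est}$, it has the same rank $n+m$ and the same kernel.

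Next, I would apply Lemma~\ref{lemma: data matrix} with $r=n$ and $c=c_t$ to obtain
$$\Rank(H^{\text{target}}_{n,c_t}) = \min\{n(m+l),\, c_t,\, n(l+1)\},$$
which equals $n+m$ at $c_t=n+m$ under the mild generic condition $m\le nl$. Assumption~\ref{asmp: pers excitation} then gives $\Ker(\subscr{L}{est}) \cap \Image(H^{\text{target}}_{n,c_t})=\{0\}$, so $\subscr{\hat L}{est}$ acts injectively on $\Image(H^{\text{target}}_{n,c_t})$ and therefore $\Rank(M) = \Rank(H^{\text{target}}_{n,c_t}) = n+m$. Hence $M$ has full row rank $n+m$, \eqref{eq: K L} admits the unique solution $\subscr{\hat K}{target} = \bar U^{\text{target}}_{c_t}\,M^\dagger$, and the theorem follows.

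The hard part is the first step: establishing full row rank of $\subscr{L}{est}$ rests on a careful algebraic manipulation of the block structure in Theorem~\ref{thm: data-driven separation} combined with Assumption~\ref{asmp: compensator observability}. The remaining steps are direct consequences of Lemma~\ref{lemma: data matrix} and the persistency-of-excitation hypothesis, and the bound $\bar T = 2n+m-1$ is exactly the smallest length that makes $c_t$ meet the number of rows of $\subscr{L}{est}$.
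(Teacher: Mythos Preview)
Your proposal is correct and follows essentially the same route as the paper: establish that $\subscr{L}{est}$ has full row rank $n+m$ by showing the $Y$-block $F_y-(a\otimes I_n)\tilde M_y$ has rank $n$ via controllability of $(E,\subscr{L}{f})$, then invoke Lemma~\ref{lemma: data matrix} and Assumption~\ref{asmp: pers excitation} to conclude that $\subscr{L}{est}H^{\text{target}}_{n,c_t}$ is invertible at $c_t=n+m$. The only technical difference is in how the rank of the $Y$-block is obtained: the paper factors it as $JSO$ with $S$ invertible and uses $\text{Rank}(JSO)=\text{Rank}(JO)=\text{Rank}(F_y)=n$, whereas you argue directly by peeling the block columns from the last one backwards to obtain $\alpha^\transpose E^kG=0$ for $k=0,\dots,n-1$; both arguments rest on the same controllability hypothesis and are equivalent (your mention of Cayley--Hamilton in that step is actually unnecessary, since the $n$ block columns already give the $n$ relations without invoking it).
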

\begin{proof}
  We first show that
  $\text{Rank}(F_y - (a \otimes I_n) \tilde{M}_y) = n$, which
  implies that $\subscr{L}{est}$ is of full row rank. With standard
  manipulation, the matrix $F_y - (a \otimes I_n) \tilde{M}_y$
  can be rewritten~as
  
  {\footnotesize
    \begin{align*}
      \underbrace{\begin{bmatrix}
          E^{n-1}  & \cdots & I
        \end{bmatrix}}_{J}
                              \underbrace{
                              \begin{bmatrix}
                                1 \\
                                -a_{n-1} &  1 \\
                                \vdots &  \ddots & \ddots  \\
                                -a_{1}  & \cdots &  -a_{n-1} &  1
                              \end{bmatrix}
                                                                       }_{S}
                                                                       \underbrace{
                                                                       \begin{bmatrix}
                                                                         G \\
                                                                         & \ddots  & \\
                                                                         & &  G
                                                                       \end{bmatrix}}_{O}.
    \end{align*}
  } \!\!Notice that $S$ is invertible and that
  $\text{Rank} (JO) = \text{Rank}(F_y)= n$ due to Assumption
  \ref{asmp: compensator observability}. Then,
  $\text{Rank} (F_y - (a \otimes I_n) \tilde{M}_y) = \text{Rank}(JSO)
  = \text{Rank}(JO)=n$. Due to Assumption \ref{asmp: pers excitation}
  and Lemma \ref{lemma: data matrix}, the matrix
  $H^{\text{target}}_{n,{c_t}}$ has full column rank $n+m$ ($n+m \leq n(l+1)$) when
  $c_t = n+m$ (equivalently, $\bar T = 2n + m -1$) and
  $\Ker(\subscr{
  L}{est}) \cap \Image(H^{\text{target}}_{n,c_t}) =
  \{0\}$. Thus, $\subscr{
  L}{est} H^{\text{target}}_{n,{c_t}}$ is
  invertible, and finally
  $\subscr{K}{target} = \subscr{\bar{U}}{target} (\subscr{
  L}{est}
  {H}_{n,c_t}^{\text{target}})^{-1}$.
\end{proof}
Using \eqref{eq: reconstruct KLQG}, we notice that the LQG controller
can be learned uniquely from a single trajectory of length
$T \ge n(l+2)-1$ for the single-input case, since the data matrix in
\eqref{eq: reconstruct KLQG} becomes of full row rank. This bound
reflects the complexity of learning the LQG controller in an imitation
learning framework. On the other hand, leveraging the separation
principle in Theorem~\ref{thm: data-driven separation}, the matrix
$\subscr{\hat L}{est}$ can be learned from $N \ge n+1$ expert datasets
and used to learn the LQG controller of any target task. By doing so,
Theorem \ref{thm: sample complexity} states that the expert trajectory
of the target task needs only to be of length $2n$ for the
single-input case. This reduced bound reflects the benefits of the
imitation and transfer learning setting, where data from earlier tasks
 is used to solve future LQG tasks. For instance, when $m = 1$ and
$l=2$, the imitation and transfer approach requires about $50\%$ less
expert data compared to the imitation approach~alone.

\begin{remark}{\bf \emph{(Learning the dimension of the system from
      data)}}\label{remark: dimension}
  The reconstruction of the LQG gain in \eqref{eq: reconstruct KLQG}
  and of the estimation matrix in Theorem \ref{thm: Lest m1} requires
  the knowledge of the dimension of the system to properly construct
  the required matrices. If unknown, the dimension of the system can
  be learned by solving the following minimization problem:
  \begin{align*}
    n = \min \setdef{r \in \mathbb{N}}{\text{Rank}(H_{r,r}) =
    \text{Rank}(H_{r+1,r+1})} .
  \end{align*}
  This follows from Lemma \ref{lemma: data matrix}, since the rank of
  $H_{r,r}$ equals $n(l+1)$ when $r = n$ and the value of $l$ can be
  easily inferred from the expert data ($l$ equals the dimension of
  $y^*$).  We note that this remark is also valid for multi-input
  systems. \oprocend
\end{remark}

\begin{remark}{\bf \emph{(Learning $\subscr{L}{est}$ when
      $m>1$)}}\label{remark: multi input}
  {\color{black}
  Lemma \ref{lemma: data matrix} implies that the data matrix in
  \eqref{eq: reconstruct KLQG} is not full row rank when $m>1$ and
  that the LQG gain in \eqref{eq: static LQG} is not unique. Although
  the decomposition in Theorem \ref{thm: data-driven separation} still
  holds, the computation of $\subscr{L}{est}$ from data is more
  involved than the procedure presented in Theorem \ref{thm: Lest
    m1}. Here we discuss two different ways for this computation but,
  in
  the interest of space and clarity, we leave a detailed treatment for
 future research. First, 
  let $\subscr{K}{LQG}^i$ be the LQG controller of the $i$-th source task. Then,
 \begin{align}\label{eq: equality gain kernel}
    \subscr{K}{LQG}^i = \bar{U}^i_{c_s} H^{i^\dag}_{n,c_s} + X_i Z_i = K^i\subscr{L}{est}
  \end{align}
  for some matrix $X_i$, where $Z_i$ is a basis of the left null space
  of ${H^i_{n,c_s}}$. By stacking these expressions together we obtain  \begin{align}\label{eq: kernel matrix}
    \begin{bmatrix}
      K^1 \\ \vdots \\ K^N
    \end{bmatrix}
    \subscr{L}{est} =
    \begin{bmatrix}
      \bar{U}^1_{c_s} H^{i^\dag}_{n,c_s} + X_1 Z_1 \\ \vdots \\ \bar{U}^N_{c_s} H^{N^\dag}_{n,c_s} + X_N Z_N .
    \end{bmatrix}.
  \end{align}
  Since $\subscr{L}{est}$ has $n+m$ rows, where $n$ is obtained from
  Remark \ref{remark: dimension}, the row space of every gain
  $\subscr{K}{LQG}^i$ must belong to the same $(n+m)$-dimensional
  subspace. Then, the matrix on the right hand side of \eqref{eq:
    kernel matrix} must have a left null space of dimension at least
  $mN - (n+m)$ for an appropriate choice of the matrices
  $X_1,\dots,X_N$. This condition can be used to find the matrices
  $X_1,\dots,X_N$ that satisfy \eqref{eq: equality gain kernel} for a
  sufficiently large number $N$. Finally,
  similar to Theorem \ref{thm: Lest m1},
  \begin{align}\label{eq: kernel opti}
    \Ker(\subscr{L}{est}) \!=
    \; \bigcap_{i=1}^N \Ker(U^i_{n,c_s} H^{i^\dag}_{n,c_s} +  X_i Z_i) .
  \end{align}}
 {Second, using the notation in Theorem \ref{thm:
    Lest m1} and \eqref{eq: K L} and the fact that
  $ \text{vec}(U^i_{c_s}) = (H^{i^{\T}}_{n,c_s} \otimes K^i)
  \text{vec}(\subscr{L}{est})$, $\subscr{L}{est}$ can also be computed
  by solving the following bi-linear problem:
  \begin{align}\label{eq: bilinear opti}
    &\min_{L, K^1,\dots,K^N} \sum_{i=1}^N \| \text{vec} (\bar
      U^i_{c_s}) -  (H^{i^{\T}}_{n,c_s} \otimes K^i) \text{vec}(L)\|
      . 
  \end{align}
  The convergence properties of the two approaches above deserve a
  full discussion that is beyond the scope of this letter; in the
  next section we provide some numerical evidence.
  \oprocend
  }
\end{remark}

\section{Illustrative example}\label{sec: example}
We use the following model of a batch reactor
system that is open-loop unstable: 
{\footnotesize
\begin{align}\label{eq: reactor}
  \begin{split}
    A &=
    \begin{bmatrix*}[r]
      1.178 & 0.001 & 0.511  & -0.403\\
      -0.051 & 0.661 & -0.011 & 0.061 \\
      0.076 & 0.335 & 0.560  & 0.382 \\
      0     & 0.335 & 0.089  & 0.849
    \end{bmatrix*}
    ,
    B =
    \begin{bmatrix*}[r]
      0.004\\
      0.467\\
      0.213\\
      0.213
    \end{bmatrix*}
    ,
    \\
    C &=
    \begin{bmatrix*}[r]
      -0.44 &  -0.51  &  0.09  & 0.44
    \end{bmatrix*}
    ,
  \end{split}
\end{align}}
with process and measurement noise covariance $W = 1.5 I_4$ and
$V = 0.6$. The weight matrices of the target task~are 
 {\footnotesize
  \begin{align*}
    Q_{\text{target}} & =
                        \begin{bmatrix*}[r]
                          6   &  1   &  1   & -3 \\
                          1   &  1   &  0   & -1 \\
                          1   &  0   &  3   & 0 \\
                          -3  &  -1   & 0  &  2 \\
                        \end{bmatrix*}
    , \text{ and }
    R_{\text{target}} = 1 .
  \end{align*}}
We compare the model-based approach in Theorem \ref{thm: data-driven
  separation} with the data-driven approach in Theorem \ref{thm: Lest
  m1}.  Using \eqref{eq:
Klqg
  model} we obtain 

\vspace{-.2cm}
{\footnotesize
  \begin{align*}
    \subscr{K}{LQG}^{\text{target}} \!=\!
    \begin{bmatrix}
      -0.01  \!\!&\!\!  0.16 \!\!&\!\! -0.54 \!\!&\!\! 1.02 \!\!&\!\! 2.6 \!\!&\!\!  -13.34 \!\!&\!\! 21.25 \!\!&\!\! -10.60
    \end{bmatrix}
                                                                \!.
  \end{align*}}
For our data-based approach, we have collected expert trajectories
$D_1, \dots, D_N$ of length $ T = 11 $ from $N = 5$ source tasks, with
weighting matrices $Q_i = i I_4$ and $R_i = I_2$ respectively. Using
Theorem \ref{thm: Lest m1}, we compute the estimation matrix
$\subscr{\hat L}{est}$ as

\vspace{-.2cm}
{\footnotesize
\begin{align*}
  \begin{bmatrix*}[r]
   -0.01  \!\!&\!  0.09 \!\!&\!  -0.30  \!\!&\!  0.45  \!\!&\!  0.08  \!\!&\!  -0.42 \!\!&\!  0.65  \!\!&\! -0.30 \\
    0.02  \!\!&\! -0.18 \!\!&\!   0.54 \!\!&\!  -0.61  \!\!&\!  0.07  \!\!&\!  -0.30  \!\!&\! 0.42  \!\!&\! -0.19 \\
    0.05  \!\!&\! -0.34 \!\!&\!   0.57  \!\!&\!  0.56  \!\!&\!  0.12  \!\!&\! -0.28  \!\!&\! -0.10 \!\!&\!  0.38  \\
   -0.04  \!\!&\!  0.23 \!\!&\!  -0.32  \!\!&\! -0.29  \!\!&\!  0.32  \!\!&\! -0.60  \!\!&\! -0.17 \!\!&\!  0.52  \\
   -0.05  \!\!&\!  0.17 \!\!&\!   0.03  \!\!&\! -0.04  \!\!&\! -0.43  \!\!&\!  0.26  \!\!&\!  0.54 \!\!&\!  0.65
 \end{bmatrix*}
\end{align*}}
It can be verified that ${\subscr{K}{LQG}^\text{target}}^\transpose \in
\Image ({\subscr{\hat L}{est}}^\transpose )$, that is, there exists a
matrix $\subscr{\hat{K}}{target}$ such that $\subscr{K}{LQG}^\text{target} =
\subscr{\hat{K}}{target} \subscr{\hat L}{est}$. This verifies that the
estimation matrix $\subscr{\hat L}{est}$ generates an internal
representation from which the LQG inputs can be computed.

Consider now the same system \eqref{eq: reactor} with two inputs, where the new input matrix and its corresponding cost matrix are:
{\footnotesize
\begin{align*}
  B =
  \begin{bmatrix*}[r]
    0.004 & -0.087 \\
    0.467 & 0.001 \\
    0.213 & -0.235 \\
    0.213 & -0.016
  \end{bmatrix*}
              , \text{ and }
              R_{\text{target}} =
              \begin{bmatrix*}[r]
                1 & 0 \\ 0 & 4 
              \end{bmatrix*}.
              \end{align*}}
                       

We follow the same steps as in the single input case to compute
$\subscr{K}{LQG}^{\text{target}}$ using the model-based approach in
\eqref{eq: Klqg model}, and then following the procedures in Remark
\eqref{remark: multi input}, we compute $\subscr{\hat L}{est}$
{\color{black} using \eqref{eq: kernel opti} and \eqref{eq: bilinear
    opti} respectively. In Fig.~\ref{fig: multi input} we plot the
  error
  $\|\subscr{K}{LQG}^\text{target}(I - \subscr{\hat{L}}{est}^\dagger
  \subscr{\hat{L}}{est}) \|$ for both approaches as the number of
  source tasks increases.} The convergence of the error implies that
$\subscr{\hat L}{est}$ obtained using the methods in Remark
\eqref{remark: multi input} becomes the correct estimation matrix for
the target LQG~controller. \begin{figure}[!t]
  \centering
  \includegraphics[width=0.8\columnwidth,trim={0cm 0cm 0cm
    0cm},clip]{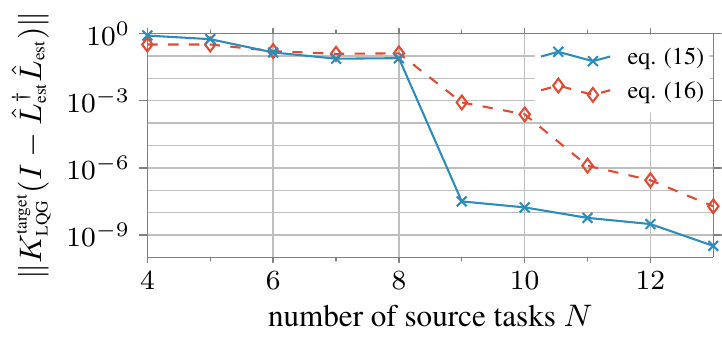}
  \caption{
  {\color{black} This figure shows the error
    $\|\subscr{K}{LQG}^\text{target}(I - \subscr{\hat{L}}{est}^\dagger
    \subscr{\hat{L}}{est}) \|$ as a function of the number of source
    tasks. The error converges for \eqref{eq: kernel opti} and
    \eqref{eq: bilinear opti} as the number of the source tasks
    increases, which implies that both approaches in
    Remark~\eqref{remark: multi input} reconstruct exactly the
    estimation
    matrix.}}
  \label{fig: multi input}
\end{figure}
%
%
\section{Conclusion}\label{sec: conclusion}
In this paper we study an imitation and transfer learning setting for
LQG control, where expert input-output trajectories are used to learn
a data-based LQG controller. We show how the LQG controller can be
computed from data, quantify the length of the expert trajectories
needed to learn the controller, and show how the controller can {\color{black} be}
decomposed as the product of an estimation matrix, which depends only
on the system dynamics, and a controller matrix, which depends also on
the LQG cost. This separation principle allows us to reuse the
estimation matrix across different LQG tasks, thus reducing the length
of the required expert trajectories. Aspects of this research
requiring additional investigation include a detailed treatment of the
multi-input case, the study of transfer methods when the system
dynamics also change, the extension to more general optimal control
problems, and {\color{black}a proof of the minimality of the proposed internal representation}.

\appendix
\subsection{Proof of Lemma \ref{lemma: data matrix}}
\begin{proof}
  From \eqref{eq: compensator} and \eqref{eq U and Y} we have
  {\footnotesize
  \begin{align*}
    U_{r}(t) & =
    \underbrace{
    \begin{bmatrix}
      H \\ H \bar E \\ \vdots \\ H \bar E^{r-1}
    \end{bmatrix}}_{\bar F_x}
    \hat x(t) +
    \underbrace{
    \begin{bmatrix}
      0 & 0 & \cdots  &  0\\
      HG & 0 & \cdots &  0\\
      \vdots & \ddots & \ddots & \vdots\\
      H\bar E^{r-2}G  & \cdots &  HG &0
    \end{bmatrix}}_{\bar F_y}
     	Y_r(t+1),                                
  \end{align*}}
  where $\bar E = E + FH$. Then, we obtain
  \begin{align*}
                               H_{r,c}
                     =
                     \underbrace{
                     \begin{bmatrix}
                       \bar F_x & \bar F_y\\
                       0 & I
                     \end{bmatrix}}_{M}
                           \underbrace{
                           \begin{bmatrix}
                             \hat x(t)  & \!\!\cdots\!\! &
                             \hat x(t+c-1)\\
                             Y_{r}(t+1)  & \!\!\cdots\!\! & Y_{r}(t+c)
                           \end{bmatrix}}_{N} .
  \end{align*}
  Further,
  $\text{Rank} (H_{r,c}) \le \min \{\text{Rank}(M), \text{Rank}(N)\}$,
  and $\text{Rank} (H_{r,c}) = \text{Rank}(M)$  whenever $N$ is of full row rank
  \cite{RAH-CRJ:85}. Notice that
  $\text{Rank}(M) \le \min\{mr+lr,n+lr\}$, and $\text{Rank}(M) = n+lr$
  if $mr \ge n$ and the pair $(\bar E, H)$ is observable. To conclude,
  \cite[Corollary 2]{JCW-PR-IM-BLMDM:05} implies that
  $\text{Rank}(N) = \min \{n+lr, c\}$.
\end{proof}

\subsection{Proof of Theorem \ref{thm: data-driven separation}}
We start with an alternative expression
for~$\subscr{K}{LQG}$.
\begin{lemma}{\bf \emph{(Alternative expression for
      $\subscr{K}{LQG}$)}}\label{lemma: Klqgi}
  Let $K_{\text{LQG},i}$ and $K_{\text{LQR},i}$ be the $i$-th row of
  $\subscr{K}{LQG}$ and $\subscr{K}{LQR}$, respectively, and define
  the matrices $P_i$ such that
  \begin{align*}
    P_ i
    \underbrace{
    \begin{bmatrix}
      \subscr{K}{LQR}\\
      \subscr{K}{LQR} E\\
      \vdots\\
      \subscr{K}{LQR} E^{n-1}
    \end{bmatrix}}_{F_x}
    =
    \underbrace{
    \begin{bmatrix}
      K_{\text{LQR},i} \\ K_{\text{LQR},i} E \\ \vdots \\K_{\text{LQR},i}E^{n-1}
    \end{bmatrix}}_{F_x^i}
    .
  \end{align*}
  for all $i \in \until{m}$. Then,
  {\footnotesize
  \begin{align}\label{eq: klqgi}
  K_{\text{LQG},i} = K_{\text{LQR},i}
    \begin{bmatrix}
      F_u + E^n {F_x^{i}}^\dag (P_i - M_u^i) & F_y - E^n {F_x^i}^\dag
      M_y^i
    \end{bmatrix}                                            ,
  \end{align}}
  where $F_x^i = P_i F_x$, $M_u^i = P_iM_u$, and $M_y^i = P_i M_y$.
\end{lemma}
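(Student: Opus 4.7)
The plan is to mirror the derivation that yields the original expression \eqref{eq: Klqg model} for $\subscr{K}{LQG}$, but to track only the $i$-th scalar component of the input sequence throughout. The observability of the pair $(E, K_{\text{LQR},i})$ guaranteed by Assumption \ref{asmp: compensator observability} will be precisely what allows $F_x^i$ to play the role of $F_x$ in the alternative formula.

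First I would iterate the compensator \eqref{eq: compensator} forward $n$ steps to obtain
\begin{align*}
  \hat x(t+n) = E^n \hat x(t) + F_u U_n(t) + F_y Y_n(t+1),
\end{align*}
which is the standard unrolling of a linear recursion with inputs and outputs. Next, stacking the relation $u^*(t+k) = K_{\text{LQR}} \hat x(t+k)$ for $k = 0,\dots,n-1$ and expanding each $\hat x(t+k)$ in terms of $\hat x(t)$ and the intervening $u^*$ and $y^*$ yields the implicit identity $U_n(t) = F_x \hat x(t) + M_u U_n(t) + M_y Y_n(t+1)$.

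The projection step is where the $i$-th row gets singled out: left-multiply this identity by $P_i$ and use the defining relations $F_x^i = P_i F_x$, $M_u^i = P_i M_u$, and $M_y^i = P_i M_y$ to obtain
\begin{align*}
  F_x^i \hat x(t) = (P_i - M_u^i) U_n(t) - M_y^i Y_n(t+1).
\end{align*}
Here $F_x^i \in \real^{n\times n}$ is the observability matrix of $(E, K_{\text{LQR},i})$, and by Assumption \ref{asmp: compensator observability} this pair is observable, so $F_x^i$ is invertible and ${F_x^i}^\dag F_x^i = I_n$. Applying ${F_x^i}^\dag$ to both sides therefore gives a closed-form expression for $\hat x(t)$ purely in terms of the past window $U_n(t),\,Y_n(t+1)$. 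Finally, I would use
\begin{align*}
u^*_i(t+n) = K_{\text{LQR},i}\hat x(t+n) = K_{\text{LQR},i}\bigl[E^n \hat x(t) + F_u U_n(t) + F_y Y_n(t+1)\bigr],
\end{align*}
substitute the expression for $\hat x(t)$ just derived, and collect the coefficients of $U_n(t)$ and $Y_n(t+1)$; the coefficient of $U_n(t)$ becomes $F_u + E^n {F_x^i}^\dag (P_i - M_u^i)$ and that of $Y_n(t+1)$ becomes $F_y - E^n {F_x^i}^\dag M_y^i$, which is exactly \eqref{eq: klqgi}.

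The only real obstacle is establishing that ${F_x^i}^\dag$ acts as a genuine left inverse of $F_x^i$, which requires $F_x^i$ to have full column rank; this is a single-output observability statement that is directly furnished by Assumption \ref{asmp: compensator observability} and does not require the pair $(E, K_{\text{LQR}})$ itself to be observable. Beyond that, the proof is careful bookkeeping that parallels the original derivation of the static LQG gain, with $P_i$ commuting through each of $F_x$, $M_u$, and $M_y$ by definition.
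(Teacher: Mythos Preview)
Your proposal is correct and follows essentially the same route as the paper: unroll the compensator to get $\hat x(t+n) = E^n \hat x(t) + F_u U_n(t) + F_y Y_n(t+1)$ and $U_n(t) = F_x \hat x(t) + M_u U_n(t) + M_y Y_n(t+1)$, left-multiply the latter by $P_i$, invoke Assumption~\ref{asmp: compensator observability} to invert $F_x^i$ and solve for $\hat x(t)$, then substitute into $K_{\text{LQR},i}\hat x(t+n)$ and collect coefficients. The paper's proof is identical in structure, with only cosmetic differences in the order of presentation.
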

\begin{proof}
  Using the compensator dynamics \eqref{eq: compensator} we obtain
    \begin{align*}
    U_n(t)
     & =
      F_x
      \hat x(t) +
      M_u
      U_n(t)
      + M_y
      Y_n(t+1)
      ,
 \end{align*}
 and
 \begin{align}\label{eq: x hat t+n}
    \hat x(t+n) & = E^n \hat x(t)  +
                                                                       F_u
                                                                     U_n(t) +
                                                                            F_y
                                                                          Y_n(t+1) .
  \end{align}
  Due to Assumption~\ref{asmp: compensator observability},
  ${F} _{x}^i$ is invertible so that
  \begin{align*}
    \hat x (t) = {F_x^i}^\dag \left( 
    (P_i - M_u^i) U_n(t) - M_y^i Y_n(t+1)\right),
  \end{align*}
  for any $i \in \until{m}$ and, consequently,
  {\footnotesize
  \begin{align}\label{eq: xhat}
  \begin{split}
    \hat x (t+n) = E^n {F_x^i}^\dag \big( 
    (P_i - M_u^i) U_n(t) &- M_y^i Y_n(t+1)\big) \\ + F_u U_n(t) + F_y Y_n(t+1) .
    \end{split}
  \end{align}}
Notice that the gain $K_{\text{LQR},i}$ must satisfy, at all times,
  \begin{align*}
    K_{\text{LQR},i} \hat x(t+n) = K_{\text{LQG},i}
    \begin{bmatrix}
      U_n(t) \\ Y_n(t+1)
    \end{bmatrix}
    .
  \end{align*}
  Substituting \eqref{eq: xhat} into $\hat x(t+n)$ in \eqref{eq: x hat
    t+n} yields the result.
\end{proof}

We are now ready to prove Theorem \ref{thm: data-driven separation}.

\begin{pfof}{Theorem \ref{thm: data-driven separation}}
  Notice that \eqref{eq: klqgi} can be rewritten as
  {\footnotesize
  \begin{align*}
    K_{\text{LQG},i} =
    \begin{bmatrix}
      K_{\text{LQR},i} & K_{\text{LQR},i}
    \end{bmatrix}
                          \begin{bmatrix}
                            F_u & F_y\\
                            E^n {F_x^i}^\dag (I-M_u^i) & -E^n {F_x^i}^\dag M_y^i
                          \end{bmatrix}.
  \end{align*}}
Further, using the Cayley-Hamilton Theorem, we have
  \begin{align*}
   K_{\text{LQR},i} E^n &= K_{\text{LQR},i} \left(a_0 I_n +
                            a_{1} E + \cdots + a_{n-1}E^{n-1}_1 \right) \\
                          &=
                          \underbrace{
                            \begin{bmatrix}
                              a_0 & a_1 & \dots & a_{n-1}
                            \end{bmatrix}}_{a}
                                                   F_x^i,
  \end{align*}
  where $a_0,\dots,a_{n-1}$ are the negative coefficients of
  the characteristic polynomial of $E$. Then, since $F_x^i$ is
  invertible (Assumption \ref{asmp: compensator observability}),
  we have
  $K_{\text{LQR},i} E^n {F_x^i}^\dag (P_i-M_u^i) = a
  (P_i-M_u^i)$ and
  $K_{\text{LQR},i} E^n {F_x^i}^\dag M_y^i = a M_y^i$, and
  \eqref{eq: klqgi} becomes
  \begin{align}\label{eq: Klqgi 2}
   K_{\text{LQG},i} =
    \begin{bmatrix}
     K_{\text{LQR},i}& 1
    \end{bmatrix}
                          \begin{bmatrix}
                            F_u & F_y\\
                            a (P_i-M_u^i) & -  a M_y^i
                          \end{bmatrix}.
  \end{align}
  Notice that 
  {\footnotesize
  \begin{align*}
    M_{u}^{i} \!=\!
    \underbrace{
    \begin{bmatrix}
      K_{\text{LQR},i}  \!\!&\!\! \\
      \!\!&\!\!  \ddots  \\ 
      \!\!&\!\! \!\!&\!\! K_{\text{LQR},i}
    \end{bmatrix}}_{\subscr{K}{diag}} \tilde{M}_u,
\;
            M_{y}^{i} \!=\!
            \begin{bmatrix}
              K_{\text{LQR},i}  \!\!&\!\! \\
              \!\!&\!\!  \ddots  \\ 
              \!\!&\!\! \!\!&\!\! K_{\text{LQR},i}
            \end{bmatrix} \tilde{M}_y,
  \end{align*}}
where $\tilde{M}_u$ and $\tilde{M}_y$ are defined in \eqref{eq: Mu My tilde}, and
{\footnotesize
   \begin{align*}
     a
     \subscr{K}{diag} & = 
\begin{bmatrix}
a_0 K_{\text{LQR},i} & \cdots & a_{n-1} K_{\text{LQR},i} \\
\end{bmatrix} 
= K_{\text{LQR},i} \left( a \otimes I_n \right),
   \end{align*}}
 Thus, \eqref{eq: Klqgi 2} becomes 
 {\footnotesize
\begin{align*}
    K_{\text{LQG},i} =
    \begin{bmatrix}
      K_{\text{LQR},i} & 1
    \end{bmatrix}
                          \begin{bmatrix}
                            F_u - (a \otimes I_n) \tilde{M}_u & F_y - (a \otimes I_n) \tilde{M}_y\\
                            a P_i & 0
                          \end{bmatrix}.
  \end{align*}
} By using $\begin{bmatrix}
	 (a P_1)^\transpose & \cdots & (a P_m)^\transpose
       \end{bmatrix}^\transpose = a \otimes I_m,$
we obtain
 {\footnotesize
\[
\begin{bmatrix}
	K_{\text{LQG},1} \\ 
	\vdots \\
	K_{\text{LQG},m}
\end{bmatrix}
\!=\!
\begin{bmatrix}
  \subscr{K}{LQR} \!\!&\!\! I_m
\end{bmatrix}
\begin{bmatrix}
  F_u - (a \otimes I_n) \tilde{M}_u \!\!&\!\! F_y - (a \otimes I_n) \tilde{M}_y\\
  a \otimes I_m \!\!&\!\! 0
\end{bmatrix}.
\]} This concludes the proof of Theorem \ref{thm: data-driven
separation}.
\end{pfof}

\bibliographystyle{unsrt}
\bibliography{alias,Main,FP,New}

\end{document}